%
\documentclass[a4paper]{article}
\usepackage{amssymb,amsmath,amsthm}
\usepackage{fullpage}
\usepackage{hyperref}
\usepackage{graphicx}
\usepackage{xcolor}
\usepackage{float}
\usepackage{stackrel}
\newcommand{\ie}{\emph{i.e.}}
\newcommand{\eg}{\emph{e.g.}}
\newcommand{\cf}{\emph{cf.}}
\newcommand{\Real}{\mathbb{R}}
\newcommand{\Nat}{\mathbb{N}}

\newcommand{\supp}{\mathop{\mathrm{supp}}\nolimits}
\newcommand{\dist}{\mathop{\mathrm{dist}}\nolimits}
\newcommand{\dom}{\mathop{\mathrm{dom}}\nolimits}

\newcommand{\Cut}{\mathrm{Cut}}
\newcommand{\der}{\mathrm{d}}

\newcommand{\sii}{L^2}
\newtheorem{Theorem}{Theorem}
\newtheorem{Proposition}{Proposition}

\newtheorem{Corollary}{Corollary}
\theoremstyle{definition}
\newtheorem{Remark}{Remark}
%
%
\usepackage[normalem]{ulem}
\definecolor{DarkGreen}{rgb}{0,0.5,0.1} 
\newcommand{\txtD}{\textcolor{DarkGreen}}

\newcommand\soutD{\bgroup\markoverwith
{\textcolor{DarkGreen}{\rule[.5ex]{2pt}{1pt}}}\ULon}
\newcommand{\Hm}[1]{\leavevmode{\marginpar{\tiny%
$\hbox to 0mm{\hspace*{-0.5mm}$\leftarrow$\hss}%
\vcenter{\vrule depth 0.1mm height 0.1mm width \the\marginparwidth}%
\hbox to
0mm{\hss$\rightarrow$\hspace*{-0.5mm}}$\\\relax\raggedright #1}}}

\begin{document}
%
\title{\Large\textbf{%
Soft quantum waveguides with an explicit cut-locus}}%
\author{Sylwia Kondej\,$^a$, \ David Krej\v{c}i\v{r}{\'\i}k\,$^b$ \
and \ Jan K\v{r}\'i\v{z}\,$^c$}	
\date{\small
\begin{quote}
\emph{
\begin{itemize}
\item[$a)$]
Institute of Physics, University of Zielona G\'ora, ul.\ Szafrana
4a, 65246 Zielona G\'ora, Poland;
s.kondej@if.uz.zgora.pl%
\item[$b)$]
Department of Mathematics, Faculty of Nuclear Sciences and
Physical Engineering, Czech Technical University in Prague,
Trojanova 13, 12000 Prague 2, Czechia;
david.krejcirik@fjfi.cvut.cz.%
\item[$c)$]
Department of Physics, Faculty of Science, University of Hradec Kr\'alov\'e,
Rokitansk\'eho 62, 500 03 Hradec Kr\'alov\'e, Czechia;
jan.kriz@uhk.cz
\end{itemize}
}
\end{quote}
21 July 2020}
\maketitle
\begin{abstract}
\noindent
We consider two-dimensional Schr\"odinger operators
with an attractive potential in the form of a channel of a fixed
profile built along an unbounded curve composed of
a circular arc and two straight semi-lines.
Using a test-function argument with help of parallel coordinates
outside the cut-locus of the curve,
we establish the existence of discrete eigenvalues.
This is a special variant of a recent result of Exner~\cite{Exner_2020}
in a non-smooth case and via a different technique which does not require non-positive constraining potentials.
%
%
\end{abstract}
%

\section{Introduction}
%
Nanotechnology make the conceptual model of
a quantum particle propagating in the vicinity of
an unbounded curve~$\Gamma$ in~$\Real^2$ a realistic system.
Mathematically, the model is described by the Schr\"odinger operator
\begin{equation}\label{Hamiltonian}
  H := -\Delta + V
  \qquad\mbox{in}\qquad
  \sii(\Real^2)
  \,,
\end{equation}
where $V:\Real^2 \to \Real$ is a potential modelling
a force which constrains the particle to the tubular neighbourhood
\begin{equation}\label{tube}
  \Omega_a := \{x\in\Real^2 : \ \dist(x,\Gamma)<a\}
  \,,
\end{equation}
where~$a$ is a positive constant such that~$\Omega_a$
does not overlap itself.

The \emph{hard-wall} idealisation
\begin{equation}\label{hard}
  V_\mathrm{hard}(x) :=
  \begin{cases}
     0 & \mbox{if} \quad x \in \Omega_a \,, \\
     \infty & \mbox{otherwise} \,, \\
  \end{cases}
\end{equation}
is certainly best understood (mathematically, $H_\mathrm{hard}$~is realised as
the Laplacian in $\sii(\Omega_a)$, subject to Dirichlet boundary conditions).
In 1989 Exner and \v{S}eba~\cite{ES} applied the Birman--Schwinger principle
and established the existence of discrete eigenvalues of~$H_\mathrm{hard}$
provided that~$\Gamma$ is not straight,
it is straight asymptotically in a suitable sense
and~$a$ \emph{is sufficiently small}.
Applying variational tools instead, Goldstone and Jaffe~\cite{GJ}
removed the last, smallness hypothesis,
making the existence of curvature-induced discrete spectra
in quantum waveguides with hard-wall boundaries a universal fact.
We refer to~\cite{KKriz} for a proof under minimal hypotheses
and further references.

The defect of the hard-wall model~\eqref{hard} is that it completely
disregards tunnelling effects.
As an alternative, in 2001 Exner and Ichinose~\cite{EI}
came with the model of \emph{leaky wires}:
\begin{equation}\label{leaky}
  V_\mathrm{leaky}(x) := \alpha \, \delta(x-\Gamma)
  \,, \qquad \alpha < 0 \,,
\end{equation}
where~$\delta$ is the Dirac delta function
(mathematically, $H_\mathrm{leaky}$ is introduced as
the Laplacian in $\sii(\Real^2)$, subject to customary jump conditions
along the interface~$\Gamma$).
Again by the Birman--Schwinger principle,
it was demonstrated in~\cite{EI} that the discrete spectrum of~$H_\mathrm{leaky}$
exists under geometric hypotheses which are however less satisfactory than
in the hard-wall case
(see~\cite{Exner_2008} for a review and further references).

The leaky model~\eqref{leaky} is another extreme for
the constraining potential is of zero-range.
As an intermediate situation, just recently Exner~\cite{Exner_2020}
introduced the model of \emph{soft waveguides}:
\begin{equation}\label{soft}
  V_\mathrm{soft}(x)
  \begin{cases}
     \mbox{is bounded and non-positive}
     & \mbox{if} \quad x \in \Omega_a \,, \\
     =0 & \mbox{otherwise} \,. \\
  \end{cases}
\end{equation}
The situation of special interest is that of quantum square well,
\ie~when $V_\mathrm{soft}$
equals a negative constant inside~$\Omega_a$.
Using the Birman--Schwinger principle,
sufficient conditions guaranteeing
that the discrete spectrum of~$H_\mathrm{soft}$ is not empty
were derived in~\cite{Exner_2008}.
In particular, the discrete eigenvalues exist whenever~$V_\mathrm{soft}$
is ``deep and narrow enough''.

It is worth mentioning that the concept of soft waveguides is implicitly included
in the work~\cite{WT} (see also~\cite{HLT})  preceding~\cite{Exner_2020}.
The obtained effective Hamiltonian of~\cite{WT}
in the limit of \emph{thin} soft waveguides can be used
to study the existence of the discrete spectrum in this asymptotic regime.

The purpose of the present note is to make a small observation
that there is a very specific class of soft waveguides for
which the existence of discrete spectra can be proved in the full generality
and directly via customary variational tools.
Moreover, we are able to consider the leaky wires~\eqref{leaky} at the same time,
so we establish new results for this model too.
The hard-wall waveguides~\eqref{hard} could be also treated simultaneously,
but our technique does not bring anything new in this case.
As a matter of fact, our modus operandi is based on developing the method
of parallel coordinates based on~$\Gamma$ involving the \emph{cut-locus} of~$\Gamma$.
The latter has an empty intersection with~$\Omega_a$,
so the approach is actually identical to~\cite{KKriz} for the hard-wall waveguides.
It is the presence of a non-trivial cut-locus of~$\Gamma$ in the whole space~$\Real^2$
which makes the present approach unprecedented.
Unfortunately, however, our argument works only because of
an explicit knowledge of the structure of the cut-locus of our specific curve~$\Gamma$.
On the other hand,  it is worth mentioning
that the method of this paper does not require the assumption of non-positivity of the constraining potential~(\ref{soft}).
This restriction made in~\cite{Exner_2020}
is substituted here
by the more general condition that the corresponding one-dimensional potential formed by taking cross-section produces at least one negative eigenvalue.
We hope that this new idea will be appreciated by the community
interested in quantum waveguides and further developed to other problems subsequently.

The structure of the paper is as follows.
In Section~\ref{Sec.smooth} we start with a general presentation
of the cut-locus idea for smooth curves.
Our non-smooth model is introduced in Section~\ref{Sec.model}.
The proof of the existence of discrete eigenvalue for the latter
is performed in Section~\ref{Sec.proof}.

\section{Parallel coordinates in the smooth case}\label{Sec.smooth}
%
Let us first explain the main idea in the usual case
of \emph{smooth} curves~$\Gamma$.
More specifically, in agreement with~\cite[Ass.~(a)]{Exner_2008},
let $\Gamma:\Real\to\Real^2$ be a $C^2$-smooth curve.
Without loss of generality, we suppose that~$\Gamma$ is parameterised
by its arc-length, \ie, $|\dot\Gamma(s)|=1$ for every $s \in \Real$.
We introduce $N(s) := (-\dot\Gamma^2(s),\dot\Gamma^1(s))$,
the unit vector normal to~$\Gamma$ at~$s$
and oriented in such a way that the Frenet frame $(\dot\Gamma,N)$
has the same orientation as~$\Real^2$.
The (signed) \emph{curvature}~$\kappa$ of~$\Gamma$ is defined
by the Frenet equation $\ddot\Gamma=\kappa N$.
More specifically,
$
  \kappa(s)
  = \dot\Gamma^1(s) \ddot\Gamma^2(s) - \dot\Gamma^2(s) \ddot\Gamma^1(s)
  = - \gamma(s)
$
for every $s \in \Real$,
where~$\gamma$ is the signed curvature of~\cite{Exner_2008}.
In our convention, the curvature~$\kappa$ has a positive sign
if the curve~$\Gamma$ is turning toward the normal~$N$.

If $V : \Real^2 \to \Real$ is an essentially bounded function
(as is the case of the soft realisation~\eqref{soft}),
then the quantum Hamiltonian~\eqref{Hamiltonian}
can be introduced as an ordinary operator sum of the self-adjoint Laplacian
with domain $H^2(\Real^2)$ and a maximal operator of multiplication
generated by~$V$. The associated closed form reads
\begin{equation}\label{form}
  h[u] := \int_{\Real^2} |\nabla u|^2
  + \int_{\Real^2} V |u|^2
  \,, \qquad
  \dom h := H^1(\Real^2)
  \,.
\end{equation}
If~$V$ is the distribution of the leaky type~\eqref{leaky},
the simplest is to start with the form~\eqref{form},
where the second term should be interpreted as
$\alpha \int_{\Gamma} |u|^2$.
Again, it is a well-defined and closed form
under our standing hypothesis that there exists
a positive number~$a$ such that the tubular neighbourhood~\eqref{tube}
does not overlap itself.
In either case, $H$~can be \emph{defined} as the self-adjoint
operator associated with~$h$ (with the properly interpreted second integral)
via the representation theorem \cite[Thm.~VI.2.1]{Kato}.

Let us consider the \emph{normal exponential map}	
$\Phi:\Real^2 \to \Real^2$ by setting
\begin{equation}
  \Phi(s,t) := \Gamma(s) + N(s) \, t
  \,,
\end{equation}
which gives rise to \emph{parallel} (or \emph{Fermi})
``coordinates'' $(s,t)$ based on~$\Gamma$.
The crucial requirement that the tubular neighbourhood~\eqref{tube}
does not overlap itself is equivalent to the fact that
the restricted map $\Phi:\Real\times(-a,a) \to \Omega_0$
is a diffeomorphism.
Since the Jacobian of~$\Phi$ is given by
\begin{equation}\label{Jacobian}
  f(s,t) := 1-\kappa(s) \, t
  \,,
\end{equation}
it is clear that a necessary condition to ensure
the property is that~$\kappa$ is bounded.
Then any~$a$ satisfying the inequality
\begin{equation}\label{Ass}
  \fbox{$
  a \, \|\kappa\|_{L^\infty(\Real)} < 1
  $}
\end{equation}
ensures that the map~$\Phi$ induces a \emph{local} diffeomorphism.
To ensure that it is a \emph{global} diffeomorphism,
one usually assumes \emph{ad hoc} that
\begin{equation}\label{Ass.global}
  \fbox{
  $\Phi \upharpoonright \Real\times(-a,a)$ is injective.
  }
\end{equation}
Within~$\Omega_a$, one observes that
$s \mapsto \Phi(s,t)$ is a curve parallel to~$\Gamma$
at the distance~$|t|$ for any fixed $t \in (-a,a)$,
while $t \mapsto \Phi(s,t)$ is a straight line
(\ie\ geodesic in~$\Real^2$)
orthogonal to $\Gamma(s)$ for any fixed $s \in \Real$.

The crucial assumption of~\cite[Ass.~(e)]{Exner_2008}
in the case of soft waveguides
is that the profile of the constraining potential~$V$ does not vary along~$\Gamma$,
\ie,
\begin{equation}\label{Ass.vary}
  \fbox{
  \mbox{$W(t) := (V \circ \Phi)(s,t)$ is independent of~$s$}
  \qquad \& \qquad
  $\supp W \subset [-a,a]$
  .
  }
\end{equation}
This is certainly the case of leaky wires~\eqref{leaky} too,
because~$\delta$ is zero range
and~$\alpha$ is assumed to be a constant.

From now on, let us assume that~$\kappa$ is bounded
and that there exists a positive~$a$ such that~\eqref{Ass}
and~\eqref{Ass.global} hold.
Define the \emph{cut-radius} maps $c_\pm:\Real \to (0,\infty)$
by the property that the segment $t \mapsto \Phi(s,t)$
for positive (respectively, negative) $t$
minimises the distance from~$\Gamma$
if, and only if, $t \in [0,c_+(s)]$ (respectively, $t \in [-c_-(s),0]$).	
The cut-radius maps are known to be continuous.
The \emph{cut-locus}
\begin{equation}
  \Cut(\Gamma) :=
  \{\Phi(s,c_+(s)) : s \in \Real\} \cup
  \{\Phi(s,-c_-(s)) : s \in \Real\}
\end{equation}
is a closed subset of~$\Real^2$ of measure zero
(see, \eg, \cite[Chap.~\txtD{III}]{Chavel}).
The map~$\Phi$, when restricted to the open set
\begin{equation}
  U := \{ (s,t) \in  \Real^2
  : -c_-(s) < t < c_+(s)
  \}
\end{equation}
is a diffeomorphism onto $\Phi(U) = \Real^2 \setminus  \Cut(\Gamma)$.
Obviously, one has the inclusion
$
  \Cut(\Gamma) \supset \{\Phi(s,t) : f(s,t)=0\}
  =: \Cut_0(\Gamma)
$.

Outside the cut-locus, we have the usual coordinates of curved
quantum waveguides. More specifically, we define the unitary map
$
  \mathcal{U} : \sii(\Real^2) \to \sii(U)
$
by setting $\mathcal{U} u := u \circ \Phi$.
Then $\mathcal{U} H \mathcal{U}^{-1}$ is the operator associated
with the quadratic form $Q[\psi] := h[\mathcal{U}^{-1}\psi]$,
$\dom Q := \mathcal{U} \dom h \subset H^1(U)$.
An explicit computation yields
\begin{equation}\label{form.transformed}
  Q[\psi]
  = \int_U \frac{|\partial_s\psi|^2}{1-\kappa(s) t} \, \der s \, \der t
  + \int_U |\partial_t\psi|^2 (1-\kappa(s) t) \, \der s \, \der t
  + \int_U W(t) \, |\psi|^2 (1-\kappa(s) t) \, \der s \, \der t
  \,,
\end{equation}
where $W(t) := (V \circ \Phi)(s,t)$
Hereafter the last integral should be interpreted as
$\alpha \int_\Real |\psi(s,0)|^2 \, \der s$
in the case of leaky wires.

The problem is that the form domain $\dom Q$
is not easy to identify because of
the boundary conditions on~$\partial U$.
An objective of this paper is to show that
there exists a special class of curves for which this is feasible
because of an explicit knowledge of the cut-locus.
Consequently, the usual variational argument
for quantum waveguides apply.

\section{Special piecewise smooth cases}\label{Sec.model}
%
Given arbitrary numbers $R>0$ and $\theta \in [0,\pi]$,
a parameterisation of the special class of curves
that we address in this paper is given by:
\begin{equation}\label{curve}
  \Gamma(s) :=
  \begin{cases}
    \left(
    (s+\frac{\theta}{2} R) \cos\frac{\theta}{2} - R \sin\frac{\theta}{2},
    -(s+\frac{\theta}{2} R) \sin\frac{\theta}{2}
    + R (1-\cos\frac{\theta}{2})
    \right)
    & \mbox{if} \quad s \leq -\frac{\theta}{2} R
    \,,
    \\
    \left(
    R \sin \frac{s}{R},
    R (1-\cos \frac{s}{R})
    \right)
    & \mbox{if} \quad -\frac{\theta}{2} R < s < \frac{\theta}{2} R
    \,,
    \\
    \left(
    (s-\frac{\theta}{2} R) \cos\frac{\theta}{2} + R \sin\frac{\theta}{2},
    (s-\frac{\theta}{2} R) \sin\frac{\theta}{2}
    + R (1-\cos\frac{\theta}{2})
    \right)
    & \mbox{if} \quad s \geq \frac{\theta}{2} R
    \,.
  \end{cases}
\end{equation}
Obviously, it is a union of a circular arc and two semi-lines,
see Figure~\ref{Fig-1}.
Indeed, the curvature reads
\begin{equation}\label{curvature.special}
  \kappa(s) =
\begin{cases}
  0 & \mbox{if} \quad s \leq - \frac{\theta}{2} R \,,
  \\
  \frac{1}{R} & \mbox{if} \quad
  - \frac{\theta}{2} R < s < \frac{\theta}{2} R \,,
  \\
  0 & \mbox{if} \quad s \geq \frac{\theta}{2} R \,.
\end{cases}
\end{equation}
\begin{figure} [H]
\begin{center}
\includegraphics[width=.40\textwidth]{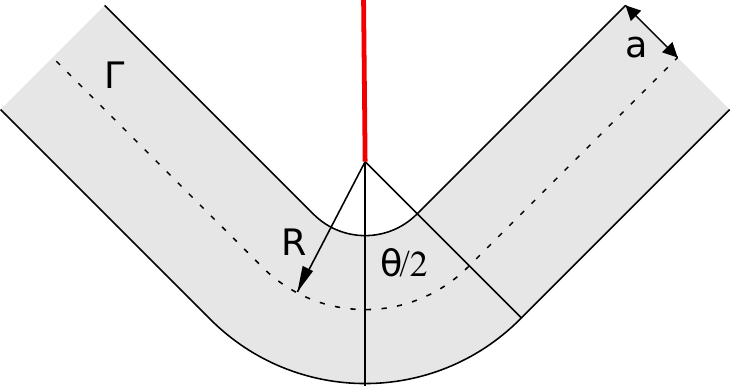}
\end{center}
\caption{The geometry of the piecewise smooth curve~\eqref{curve}
and its $a$-tubular neighbourhood}. The red line depicts
the cut-locus $\Cut(\Gamma)$.
\label{Fig-1}
\end{figure}

The case $\theta=0$ corresponds to~$\Gamma$ being a straight line,
while the other extreme situation $\theta=\pi$
is a union of a semi-circle and two parallel semi-lines,
see Figure~\ref{Fig-2}.



%
\begin{figure}[h]
\begin{center}
\begin{tabular}{ccc}
\includegraphics[width=5.5 cm]{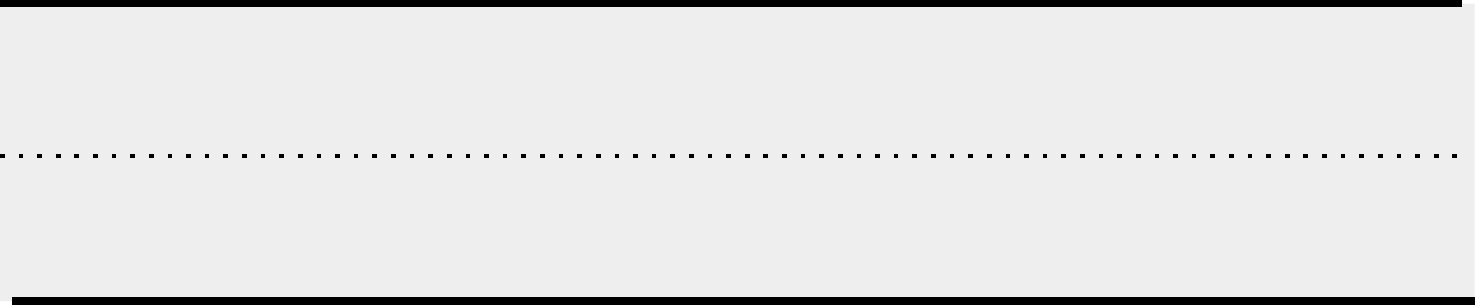}
&\rule{10ex}{0ex}&
\includegraphics[width=3.7 cm]{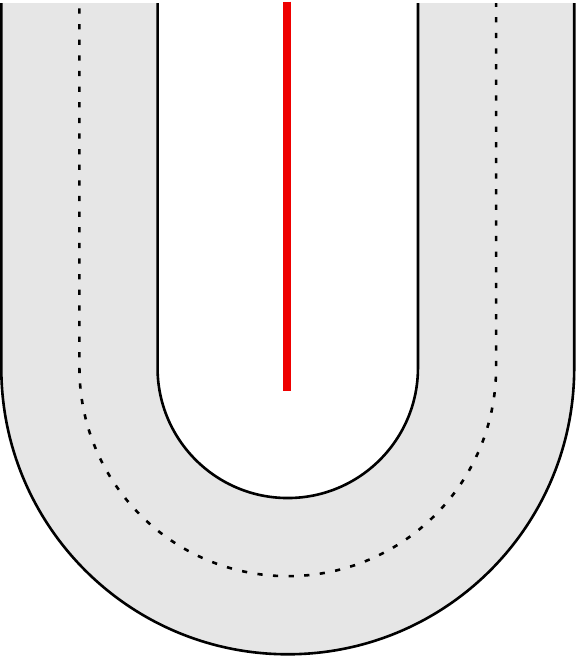}
\\
$\theta = 0$
&&
$\theta =\pi$
\end{tabular}
\caption{Extreme cases of the geometry setting~\eqref{curve}.}
\label{Fig-2}
\end{center}
\end{figure}

Except for $\theta=0$, the curve~$\Gamma$ is not $C^2$-smooth,
but it is~$C^{1,1}$-smooth and in fact piecewise analytic.
While the cut-locus of a non-smooth submanifold
in a Riemannian manifold is potentially a subtle object
(\cf~\cite{Itoh_1996}), it is reasonable in our case.
Except for $\theta=0$ when the cut-locus is empty,
it is just a semi-line
$$
  \Cut(\Gamma) = \{(0,y): \ y \geq R\}
  \qquad \mbox{whenever} \qquad
  \theta \in (0,\pi] \,.
$$
At the same time, $\Cut_0(\Gamma)=(0,R)$
for every $\theta \in (0,\pi]$.
Consequently, we have
$c_-(s) = +\infty$ for all $s \in \Real$ and
\begin{equation}\label{cut.special}
  c_+(s) =
\begin{cases}
  \frac{-s + R (\tan\frac{\theta}{2}-\frac{\theta}{2})}
  {\tan\frac{\theta}{2}}
  & \mbox{if} \quad s \leq - \frac{\theta}{2} R \,,
  \\
  R
  & \mbox{if} \quad
  - \frac{\theta}{2} R < s < \frac{\theta}{2} R \,,
  \\
  \frac{s + R (\tan\frac{\theta}{2}-\frac{\theta}{2})}
  {\tan\frac{\theta}{2}}
  & \mbox{if} \quad s \geq \frac{\theta}{2} R \,.
\end{cases}
\end{equation}
Strictly speaking, the formula for~$c_+$ makes sense
only if $\theta \in (0,\pi)$,
but the extreme situations can be recovered
after taking the respective one-sided limits
$\theta \to 0^+$ or $\theta \to \pi^-$;
namely,
$c_+(s) = +\infty$ for all $s \in \Real$ if $\theta=0$
and $c_+(s) = R$ for all $s \in \Real$ if $\theta=\pi$.

It is easy to see that~\eqref{Ass} and~\eqref{Ass.global}
hold for every $a < R$ if $\theta \in (0,\pi]$
and any~$a$ if $\theta = 0$.
The parallel coordinates described in the preceding section
extend to the present non-smooth case without any changes.
Moreover, because of the special structure of the cut-locus,
the formula~\eqref{form.transformed} becomes extremely useful
for developing the usual test-function argument.

\section{Existence of bound states}\label{Sec.proof}
%

We assume that the $V$ is either the distribution~\eqref{leaky}
or it is an essentially bounded function
satisfying~\eqref{Ass.vary}.
Then the corresponding one-dimensional operator
$$
  T := -\partial_t^2 + W(t)
  \qquad \mbox{in} \qquad
  \sii(\Real)
$$
with form domain~$H^1(\Real)$
(the sum should be understood as the form sum
in the case~\eqref{leaky}) has the essential spectrum covering $[0,\infty)$
in both cases.
From now we assume that~$W$ is attractive in the sense that
\begin{equation}\label{Ass.attractive}
  \fbox{
  \mbox{$T$ possesses at least one simple negative eigenvalue.
  }}
\end{equation}
\begin{Remark}
Hypothesis~\eqref{Ass.attractive} always holds in the leaky case~\eqref{leaky},
because~$\alpha$ is assumed to be a negative constant.
In general, a sufficient condition to guarantee~\eqref{Ass.attractive}
is that $\int W\, < 0$
(which particularly involves negative potentials of~\cite{Exner_2020}).
Moreover, it is easy to design potentials which simultaneously
satisfy $\int W\, \geq  0$ and~\eqref{Ass.attractive}
(\eg, it is enough to consider the strong coupling regime
of any~$W$ possessing a negative minimum, see~\cite[Thm.~4]{FK1}).
\end{Remark}

Let $E_1<0$ denote the lowest discrete eigenvalue of~$T$
(explicitly, $E_1 = -\frac{\alpha^2}{4}$ in the case~\eqref{leaky}).
It is well known that~$E_1$ is simple
and that the corresponding eigenfunction~$\xi_1$
can be chosen to be positive.
We additionally choose the eigenfunction
to be normalised to~$1$ in $\sii(\Real)$.
Explicitly,
$
  \xi_1(t) = \sqrt{\frac{|\alpha|}{4}} e^{\frac{\alpha}{2} |t|}
$
in the leaky case~\eqref{leaky}.
In any case, one knows that
$\xi_1 \in H^1(\Real) \cap L^\infty(\Real)$
and that the following identities hold true:
\begin{equation}\label{asymptotics}
  \xi_1(t)
  = N_\pm e^{\mp\sqrt{-E_1} t}
  \qquad \mbox{for every} \qquad
  \pm t >  a
  \,,
\end{equation}
where~$N_\pm$ are positive constants.

At the same time, for every positive number~$R$,
let us consider the double-well operator
$$
  T_R := -\partial_t^2 + W(t-R) + W(-t-R)
  \qquad \mbox{in} \qquad
  \sii(\Real)
  \,.
$$
Again, $\sigma_\mathrm{ess}(T_R) = [0,\infty)$
and~\eqref{Ass.attractive} ensures that~$T_R$
possesses at least one negative eigenvalue.
Let us denote by~$E_{1,R}$ the lowest one
and let~$\xi_{1,R}$ be a corresponding eigenfunction.
One has the following strict inequality.

%
\begin{Proposition}\label{Prop.strict}
Assume~\eqref{Ass.attractive}. Then
\begin{equation}\label{strict}
  E_{1,R} < E_1 \,.
\end{equation}
\end{Proposition}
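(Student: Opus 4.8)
The plan is to use the variational characterisation of $E_{1,R}$ as the infimum of the Rayleigh quotient for $T_R$ and exhibit a test function that beats $E_1$. The natural candidate is built from the single-well ground state $\xi_1$: set, for a parameter $R$, something like $\phi_R(t) := \xi_1(t-R) + \xi_1(-t-R)$, which is an even, positive function in $H^1(\Real)$ concentrated near $t=\pm R$. Since $W$ has support in $[-a,a]$ and $a<R$ (this will be used), on the support of $W(t-R)$ the term $\xi_1(-t-R)$ is exponentially small of order $e^{-\sqrt{-E_1}\,(2R-\mathcal{O}(a))}$, and symmetrically; so $\phi_R$ is, up to exponentially small corrections, a glued pair of translated ground states of the decoupled single wells.

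First I would compute the numerator $\langle \phi_R, T_R \phi_R\rangle$ and the denominator $\|\phi_R\|^2$. Writing $\phi_R = u + \tilde u$ with $u(t)=\xi_1(t-R)$, $\tilde u(t)=\xi_1(-t-R)$, one has $T_R u = (-\partial_t^2 + W(t-R))u + W(-t-R)u = E_1 u + W(-t-R)u$, and the cross term $W(-t-R)u$ is supported where $\tilde u$ is exponentially small, hence produces only an exponentially small contribution when paired against $\phi_R$; the same holds for the $\tilde u$ piece. Collecting terms, $\langle \phi_R, T_R\phi_R\rangle = E_1\|\phi_R\|^2 + 2\langle u, T_R \tilde u\rangle$ up to exponentially small errors, and the remaining cross term $\langle u, T_R\tilde u\rangle = \langle u, (-\partial_t^2+W(-t-R))\tilde u\rangle + \langle u, W(t-R)\tilde u\rangle = E_1\langle u,\tilde u\rangle + \langle u, W(t-R)\tilde u\rangle$. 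Thus the Rayleigh quotient equals $E_1$ plus $2\langle u, W(t-R)\tilde u\rangle / \|\phi_R\|^2$ up to higher-order exponentially small terms, and I need to show this leading correction is \emph{strictly negative}.

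The main obstacle — and the heart of the argument — is pinning down the sign of $\langle u, W(t-R)\tilde u\rangle = \int W(t-R)\,\xi_1(t-R)\,\xi_1(-t-R)\,\der t$. Here $\xi_1>0$ and $\xi_1(-t-R)>0$, but $W$ need not have a sign (this is precisely the point of hypothesis~\eqref{Ass.attractive} replacing non-positivity). The trick is to rewrite $W\xi_1 = (\partial_t^2 + E_1)\xi_1$ using the eigenvalue equation for $T$: on the support of $W(\cdot-R)$, $W(t-R)\xi_1(t-R) = \xi_1''(t-R) + E_1\,\xi_1(t-R)$. Substituting and integrating by parts (shifting derivatives onto the exponentially decaying factor $\xi_1(-t-R)$, which by~\eqref{asymptotics} equals $N_- e^{-\sqrt{-E_1}(-t-R)} = N_- e^{\sqrt{-E_1}(t+R)}$ for $-t-R>a$, i.e. beyond the support region we need — and more carefully one keeps the full $\xi_1$), the integral collapses, by the standard Harnack/Combes–Thomas-type computation for tunnelling splitting, to a boundary term proportional to a Wronskian of $\xi_1$ with the decaying exponential, which is manifestly of a definite sign. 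Concretely one expects $\langle u, W(t-R)\tilde u\rangle = -c\, e^{-2\sqrt{-E_1}\,R}\,(1+o(1))$ with $c>0$ determined by $N_\pm$ and $\sqrt{-E_1}$; the negativity is the usual fact that the bonding (symmetric) combination of two wells sits below the single-well energy. Combining, the Rayleigh quotient of $\phi_R$ is $E_1 - c'\,e^{-2\sqrt{-E_1}\,R} + o(e^{-2\sqrt{-E_1}\,R}) < E_1$ for this particular $R$, which by the min–max principle gives $E_{1,R} < E_1$ as claimed. If one wishes to avoid the delicate Wronskian bookkeeping, an alternative is a softer argument: take the test function $\psi_R := \xi_1(|t|-R)$ (an even function that is the single-well ground state profile reflected about $\pm R$), note $T_R\psi_R = E_1\psi_R$ pointwise away from $t=0$ while at $t=0$ the kink contributes a negative Dirac-type defect to the quadratic form (the corner of $\psi_R$ at the origin lowers $\int|\psi_R'|^2$ relative to the naive value because the one-sided derivatives there have opposite signs), yielding $\langle\psi_R,T_R\psi_R\rangle < E_1\|\psi_R\|^2$ directly; I would likely present this cleaner version, with the exponential estimates relegated to controlling that $\psi_R\in\dom$ of the form and that the defect is genuinely strictly negative.
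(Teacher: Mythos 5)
Your preferred ``cleaner version'' with the reflected test function $\psi_R(t)=\xi_1(|t|-R)$ is exactly the paper's proof: integrating by parts, the kink at $t=0$ produces the boundary term $-2\,\xi_1(-R)\dot\xi_1(-R)=-2\sqrt{-E_1}\,N_-^2\,e^{-2\sqrt{-E_1}\,R}<0$ in the quadratic form, so the Rayleigh quotient drops strictly below $E_1$ and the min--max principle gives \eqref{strict}. Your first, bonding-combination sketch is a workable alternative but, as you yourself note, requires more delicate sign bookkeeping; the paper goes directly for the reflected test function.
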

\begin{proof}
The variational definition of~$E_{1,R}$ reads
$$
  E_{1,R} = \inf_{\stackrel[\psi \not= 0]{}{\psi \in H^1(\Real)}}
  \frac{Q_R[\psi]}{\displaystyle \int_\Real |\psi(t)|^2 \, \der t}
  \,,
$$
where
$$
  Q_R[\psi] := \int_\Real |\dot\psi(t)|^2 \, \der t
  + \int_\Real W(t+R) \, |\psi(t)|^2 \, \der t
  + \int_\Real W(-t-R) \, |\psi(t)|^2 \, \der t
  \,.
$$
Using
$$
  \psi(t) :=
  \begin{cases}
    \xi_1(t-R) & \mbox{if} \quad t \geq  0 \,,
    \\
    \xi_1(-t-R) & \mbox{if} \quad t < 0 \,,
  \end{cases}
$$
as the test function and integrating by parts, one finds
$$
  Q_R[\psi] = E_1 \int_\Real |\psi(t)|^2 \, \der t
  + \big[\bar\psi\dot\psi\big]_{0^+}^{0^-}
  \,,
$$
where
$$
  \big[\bar\psi\dot\psi\big]_{0^+}^{0^-}
  = \xi_1 (-R) \left( \dot\xi_1 (-R^ -) - \dot\xi_1 (-R^ +) \right)
  = - 2\sqrt{-E_1} \, N_-^2 e^{-2 \sqrt{-E_1} R} \,
  < 0 \,,
$$
which proves the desired claim.
\end{proof}
%
Note that the inequality  $E_{1,R} < E_1$ holds in the case~\eqref{leaky} as well. This follows
directly from the  argument derived in the proof of \cite[Lem.~2.3]{KK4}.

Now we are in a position to localise
the essential spectrum of~$H$.
\begin{Proposition}\label{Prop.ess}
Let~$\Gamma$ be given by~\eqref{curve}.
Let~$V$ satisfy~\eqref{Ass.vary} and~\eqref{Ass.attractive}.
Then
$$
  \sigma_\mathrm{ess}(H) =
\begin{cases}
  [E_1,\infty) & \mbox{if} \quad \theta \in [0,\pi) \,,
  \\
  [E_{1,R},\infty) & \mbox{if} \quad \theta=\pi \,.
\end{cases}
$$
\end{Proposition}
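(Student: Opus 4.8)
The plan is to prove the two inclusions $\sigma_\mathrm{ess}(H) \subseteq S$ and $\sigma_\mathrm{ess}(H) \supseteq S$ separately, where $S$ denotes the claimed right-hand side. The key observation is that away from the circular arc the curve~$\Gamma$ consists of two straight semi-lines, and along a straight line the Hamiltonian~\eqref{Hamiltonian} with a potential of the form~\eqref{Ass.vary} decouples (after rotation) into the transverse operator~$T$ tensored with $-\partial_x^2$ on the longitudinal line, whose spectrum is exactly $[E_1,\infty)$. Since the arc is compact, one expects the essential spectrum of~$H$ to be governed by these straight asymptotic regions. The subtlety is that when $\theta = \pi$ the two semi-lines are parallel and point in the same direction, so at large distances along the axis the two transverse potential channels overlap and one sees the double-well operator~$T_R$ instead of~$T$; this is the source of the dichotomy in the statement.

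First I would establish the upper bound $\sigma_\mathrm{ess}(H) \subseteq S$ by a Neumann bracketing / decomposition argument. Cut~$\Real^2$ into a bounded region containing the arc (and a neighbourhood of the cut-locus origin at $(0,R)$) plus the unbounded complement, imposing additional Neumann conditions on the dividing curves; this only lowers the spectrum and, by the min-max principle together with the fact that the bounded piece contributes only to the discrete spectrum, it does not change the essential spectrum. On the unbounded piece, for $\theta \in [0,\pi)$ the two semi-line channels are eventually disjoint (their $a$-tubular neighbourhoods separate), so the operator there is a direct sum of two copies of (a piece of) the straight waveguide Hamiltonian, each unitarily equivalent via parallel coordinates to $T \otimes I + I \otimes (-\partial_x^2)$ on a half-line with Neumann condition, whose essential spectrum is $[E_1,\infty)$. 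For $\theta = \pi$ the two channels merge along the line $\{x = 0,\ y \ge R\}$ and the relevant transverse operator becomes $T_R$; combined with the one-channel pieces further out (which give $[E_1,\infty) \supseteq [E_{1,R},\infty)$ by Proposition~\ref{Prop.strict}), the bottom is $E_{1,R}$. A matching Dirichlet bracketing argument gives the reverse inclusion $\sigma_\mathrm{ess}(H) \supseteq S$: restricting to suitable far-out rectangular channels inside~$\Omega_a$ with Dirichlet conditions and sending the channels to infinity produces, via Weyl sequences, every point of $S$ in $\sigma_\mathrm{ess}(H)$; alternatively, one constructs explicit Weyl singular sequences directly by multiplying the transverse ground state $\xi_1$ (respectively $\xi_{1,R}$) by longitudinal plane waves $e^{ikx}$ localised far along one of the semi-lines and cut off smoothly, checking that $\|(H - E_1 - k^2)\psi_n\| \to 0$.

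I expect the main obstacle to be the $\theta = \pi$ case, specifically the careful treatment of the region near and beyond the cut-locus tip $(0,R)$ where parallel coordinates break down and where the two transverse channels transition from disjoint to overlapping. One must argue that this transition region is effectively compact — it is a bounded neighbourhood of the point $(0,R)$ together with the arc — so that it contributes nothing to the essential spectrum, while far out along $\{x=0,\ y\ge R\}$ the genuine transverse operator is the full double well $T_R$ and not merely two decoupled wells. Handling the geometry here rigorously (quantifying how fast the two semi-lines approach their common asymptote as $y\to\infty$, and confirming that $W(t-R)+W(-t-R)$ is exactly the transverse potential seen in the limit) is the delicate point; the remaining pieces are routine bracketing and Weyl-sequence constructions of the type used in~\cite{KKriz, EI}.
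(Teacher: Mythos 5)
Your overall strategy coincides with the paper's: Neumann bracketing plus separation of variables to bound $\inf\sigma_\mathrm{ess}(H)$ from below, and explicit Weyl sequences built from the transverse ground state times longitudinal plane waves for the inclusion $\supseteq$. The paper in fact only carries out the bracketing for $\theta=\pi$, using the three pieces $\Omega_1=\Real\times(R,\infty)$, $\Omega_2=(-2R,2R)\times(-R,R)$ and the remainder, and disposes of $\theta\in[0,\pi)$ by citing \cite[Prop.~3.1]{Exner_2008}; you propose to redo the bracketing for all $\theta$, which is legitimate but is precisely where your sketch has a gap. If your unbounded pieces are finite-width channels around the two semi-lines (so that they can be made disjoint for $\theta\in(0,\pi)$), then the transverse operator appearing after separation of variables is $-\partial_t^2+W$ on a bounded interval with \emph{Neumann} conditions, whose ground-state energy lies \emph{below} $E_1$; the bracketing then yields a lower bound strictly weaker than $E_1$ and the argument fails as stated. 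If instead you take full half-planes perpendicular to the semi-lines (so that the genuine full-line operator $T$ appears, as your formula $T\otimes I+I\otimes(-\partial_x^2)$ suggests), those half-planes overlap for $\theta\in(0,\pi)$ and do not form a Neumann partition. You must either pass to the limit of widening channels, or arrange a genuine partition and control the leftover wedges; the paper's $\theta=\pi$ choice works cleanly exactly because the single half-plane $\Real\times(R,\infty)$ contains both semi-lines and separation of variables there produces the full-line double well $T_R$ exactly (the supports of $W(\cdot-R)$ and $W(-\cdot-R)$ are disjoint since $a<R$, so no asymptotic matching of the two channels is needed --- your worry about the semi-lines ``approaching a common asymptote'' is moot, as they are exactly the parallel lines $x=\pm R$ for $y\geq R$).

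A second, smaller point: Dirichlet bracketing does not give the inclusion $\sigma_\mathrm{ess}(H)\supseteq S$. Bracketing controls $\inf\sigma$ and counting functions, not membership of individual points of $[E_1,\infty)$ (resp.\ $[E_{1,R},\infty)$) in the \emph{essential} spectrum. Only your alternative --- the explicit singular Weyl sequences $\varphi_n(y)\,\xi_1(t)\,e^{iky}$ (resp.\ $\xi_{1,R}$) pushed to infinity along a semi-line, with the verification $\|(H-E_1-k^2)\psi_n\|/\|\psi_n\|\to0$ --- actually proves the inclusion, and this is what the paper does, following \cite[Sec.~3.2]{KK4}. With the bracketing geometry repaired (or replaced by the citation to \cite{Exner_2008} for $\theta\in[0,\pi)$) and the inclusion argued exclusively via Weyl sequences, your proof would match the paper's.
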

\begin{proof}
The case $\theta=0$ follows trivially by a separation of variables
(in fact, $\sigma(H)=[E_1,\infty)$ if $\theta=0$).
The cases $\theta \in [0,\pi)$
are due to \cite[Prop.~3.1]{Exner_2008}
(the lack of smoothness is not essential for
the arguments given there).
It remains to analyse the pathological situation $\theta=\pi$.
The result is intuitively clear because the essential spectrum
is determined by the behaviour at infinity only.
To prove it rigorously,
we proceed similarly to \cite[Sec.~3]{KK4}.

To show that $\inf\sigma_\mathrm{ess}(H) \geq E_{1,R}$,
we divide~$\Real^2$ into three subdomains
$$
\begin{aligned}
  \Omega_1 &:= \Real \times (R,\infty) \,, \\
  \Omega_2 &:= (-2R,2R) \times (-R,R) \,, \\
  \Omega_3 &:= \Real^2 \setminus \overline{(\Omega_1 \cup \Omega_2)}
  \,,
\end{aligned}
$$
and consider an auxiliary operator~$H^N$ which acts
in the same way as~$H$ but satisfies Neumann conditions
on the boundaries of the subdomains.
Since $H^N = H_1^N \oplus H_2^N \oplus H_3^N$,
where~$H_\iota^N$ with $\iota \in \{1,2,3\}$
is an operator in $\sii(\Omega_\iota)$
which acts in the same way as~$H$ but satisfies Neumann conditions
on~$\partial\Omega_\iota$, the minimax principle implies
$$
  \inf\sigma_\mathrm{ess}(H)
  \geq \min\sigma_\mathrm{ess}(H^N)
  = \min\left\{
  \inf\sigma_\mathrm{ess}(H_1^N),
  \inf\sigma_\mathrm{ess}(H_2^N),
  \inf\sigma_\mathrm{ess}(H_3^N)
  \right\}
  .
$$
Since~$H_2^N$ acts in a regular bounded domain,
its spectrum is purely discrete,
so it does not contribute
(conventionally, $\inf\sigma_\mathrm{ess}(H_2^N)=\infty$).
Since the subdomain~$\Omega_3$ does not intersect
the support of~$V$,
the operator~$H_3^N$ acts as the Laplacian, so
$
  \inf\sigma_\mathrm{ess}(H_3^N)
  \geq \inf\sigma(H_3^N)
  \geq 0
$.
Finally, the spectral problem for~$H_1^N$
can be found by a separation of variables
with the result
$
  \sigma(H_1^N)
  = \sigma_\mathrm{ess}(H_1^N)
  = [E_{1,R},\infty)
$.

To show that $\sigma_\mathrm{ess}(H) \supset [E_{1,R},\infty)$,
we construct an explicit Weyl sequence by
mollifying the function $(x,y) \mapsto \xi_1(x) \, e^{iky}$
and localising it at infinity $y \to \infty$
We refer to \cite[Sec.~3.2]{KK4} for more details.
\end{proof}
\begin{Remark}
The part of the proposition for $\theta=\pi$
shows that condition~\cite[Ass.~(c)]{Exner_2008}
is necessary to have the stability of the essential spectrum.
\end{Remark}

Now we turn to the existence of the spectrum below
the bottom of the essential spectrum.
\begin{Theorem}\label{Thm.main}
Let~$\Gamma$ be given by~\eqref{curve}.  Moreover, assume
that~$V$ satisfy~\eqref{Ass.vary} and  \eqref{Ass.attractive}.
%
%
If $\theta \in (0,\pi)$,
then
$$
  \inf\sigma(H) < E_1 \,.
$$
\end{Theorem}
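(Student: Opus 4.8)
The plan is to use the variational characterisation of $\inf\sigma(H)$ together with the transformed quadratic form~\eqref{form.transformed} and the explicit cut-locus~\eqref{cut.special}. By Proposition~\ref{Prop.ess}, when $\theta\in(0,\pi)$ the bottom of the essential spectrum equals $E_1$, so it suffices to exhibit a trial function $u\in H^1(\Real^2)$ with $h[u]<E_1\|u\|^2$. Passing through the unitary $\mathcal U$, this amounts to finding $\psi\in\dom Q\subset H^1(U)$ with $Q[\psi]<E_1\|\psi\|^2_{L^2(U)}$, where the weight is $f(s,t)=1-\kappa(s)t$ on the region $U=\{-\infty<t<c_+(s)\}$ (recall $c_-\equiv+\infty$). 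The natural first guess is the separated function $\psi_0(s,t):=\varphi(s)\,\xi_1(t)$, with $\xi_1$ the positive ground state of $T$ from~\eqref{asymptotics} and $\varphi$ a cutoff to be chosen; the point of the cut-locus machinery is precisely that $\psi_0$, pulled back by $\Phi$, extends to a genuine $H^1(\Real^2)$ function despite the boundary $\partial U$, because $\Phi$ collapses $\partial U$ onto the measure-zero set $\Cut(\Gamma)$ and one only needs matching values (no derivative jump) there.

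The key computation is to insert $\psi=\varphi(s)\xi_1(t)$ into~\eqref{form.transformed} and compare with $E_1\|\psi\|^2$. Writing out the three terms and using $-\xi_1''+W\xi_1=E_1\xi_1$, after an integration by parts in $t$ one expects to reach an expression of the schematic form
\begin{equation*}
  Q[\psi]-E_1\|\psi\|^2
  = \int_U \frac{|\varphi'(s)|^2}{f(s,t)}|\xi_1(t)|^2
  - \int_\Real |\varphi(s)|^2 \,\kappa(s)\,\Big[\tfrac{d}{dt}\big(t\,|\xi_1(t)|^2\big)?\Big]\,\der s + (\text{curvature/boundary terms}).
\end{equation*}
The decisive negative contribution should come from the curvature term supported on the arc $|s|<\tfrac{\theta}{2}R$, where $\kappa=1/R>0$; on the two straight half-lines $\kappa=0$ so those regions contribute only the nonnegative first term. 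Thus if $\varphi\equiv 1$ on $\mathrm{supp}\,\kappa$ and $\varphi$ decays slowly to $0$ at $s\to\pm\infty$, the curvature term gives a fixed negative number while the gradient term $\int |\varphi'|^2/f\;|\xi_1|^2$ can be made arbitrarily small by the usual logarithmic-cutoff (or scaling) trick. One subtlety is that the domain $U$ is unbounded in the $+t$ direction only up to $c_+(s)$, which on the straight parts grows linearly in $|s|$; but since $\xi_1$ decays exponentially in $t$ by~\eqref{asymptotics}, all $t$-integrals converge and the truncation of the $t$-range at $c_+(s)$ rather than $+\infty$ only removes exponentially small mass, easily absorbed.

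I would carry this out in the following order: (i) record that $\psi_0=\varphi\xi_1$, via $\mathcal U^{-1}$, lies in $H^1(\Real^2)$ — this is where the explicit description $\Cut(\Gamma)=\{(0,y):y\ge R\}$ and the continuity of $c_\pm$ are used, to see that the pulled-back function has no jump across the cut-locus and has $L^2$ gradient; (ii) compute $Q[\psi_0]-E_1\|\psi_0\|^2$ exactly for $\varphi\equiv1$ near the arc and general $\varphi$, isolating the strictly negative arc-curvature term (using $\xi_1>0$ and $1/R>0$); (iii) choose a family $\varphi_n$ (e.g. $\varphi_n(s)=1$ for $|s|\le n$, $\varphi_n(s)=2-\log|s|/\log n$ for $n\le|s|\le n^2$, $0$ beyond) so that $\int|\varphi_n'|^2/f\,|\xi_1|^2\to0$; (iv) conclude that for $n$ large $Q[\psi_0^{(n)}]<E_1\|\psi_0^{(n)}\|^2$, hence $\inf\sigma(H)<E_1$. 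The main obstacle I anticipate is step (ii): getting the sign right and making sure the boundary terms generated by integrating by parts in $t$ (at $t=c_+(s)$, and at the junctions $s=\pm\tfrac\theta2R$ where $\kappa$ jumps and $c_+$ has a corner) either vanish or have the favourable sign; the $C^{1,1}$ regularity of $\Gamma$ and the fact that $\kappa\in L^\infty$ with $\mathrm{supp}\,\kappa=[-\tfrac\theta2R,\tfrac\theta2R]$ should make these terms harmless, but this is the place where the non-smoothness of the curve has to be handled with care rather than waved away.
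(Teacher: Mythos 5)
Your overall strategy coincides with the paper's: pass to parallel coordinates on $U$, use the separated trial function $\psi_n=\varphi_n(s)\xi_1(t)$ (which indeed lies in $\dom Q$ because its traces match across the cut-locus, as you argue via the symmetry of~$\Gamma$), make the $\partial_s$-term vanish with a cutoff, and reduce the rest to an integration by parts in~$t$. The gap is in the step you yourself flag as the obstacle, namely the sign analysis in your step (ii), and your guess about how it resolves is wrong in an essential way. Carrying out the integration by parts using $-\ddot\xi_1+W\xi_1=E_1\xi_1$ gives
$$
  Q[\psi_n]-E_1\|\psi_n\|^2
  =\int_U\frac{|\dot\varphi_n|^2}{f}\,\xi_1^2
  +\int_\Real|\varphi_n(s)|^2
  \Big[\tfrac12\,\kappa(s)\,\xi_1(t)^2+\xi_1(t)\dot\xi_1(t)\,\big(1-\kappa(s)t\big)\Big]_{t=-\infty}^{t=c_+(s)}\,\der s \,.
$$
On the arc $|s|<\frac{\theta}{2}R$ one has $c_+=R$ and $1-\kappa c_+=0$, so the arc contributes $+\frac{\theta}{2}\,\xi_1(R)^2$: the curvature term that you expected to be the ``decisive negative contribution'' is in fact \emph{positive}. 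The negativity comes precisely from the two straight half-lines, which you dismiss as contributing ``only the nonnegative first term'': there $\kappa=0$ but $c_+(s)$ is finite and grows linearly with slope $1/\tan\frac{\theta}{2}$, so the boundary term at the cut-locus equals $-\sqrt{-E_1}\,\xi_1(c_+(s))^2<0$ by~\eqref{asymptotics}, and integrating it over $\Real\setminus I$ yields exactly $-\xi_1(R)^2\tan\frac{\theta}{2}$. The proof closes only because of the elementary inequality $\frac{\theta}{2}<\tan\frac{\theta}{2}$ for $\theta\in(0,\pi)$, i.e.\ the cut-locus boundary terms dominate the positive arc contribution. Relatedly, you mischaracterise the truncation of the $t$-range at $c_+(s)$ as removing ``exponentially small mass, easily absorbed'': the resulting boundary integral is a fixed finite quantity of order $\xi_1(R)^2\tan\frac{\theta}{2}$ and is the entire mechanism of binding, not an error term. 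Without identifying this mechanism and the inequality $\frac{\theta}{2}-\tan\frac{\theta}{2}<0$, the argument does not close. (A minor point: the logarithmic cutoff is unnecessary here; since $\xi_1\in L^2(\Real)$ and $f\equiv 1$ off the arc, the simple linear cutoff already gives $\int|\dot\varphi_n|^2\xi_1^2\le 2/n\to 0$.)
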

\begin{proof}
Let us introduce the shifted form
$\tilde{Q}[\psi] := Q[\psi] - E_1 \|\psi\|^2$.
It is enough to find a test function $\psi \in \dom Q$
such that $\tilde{Q}[\psi] < 0$.
Then the desired result follows by the minimax principle.
The idea which comes back to~\cite{GJ}
(see~\cite{KKriz} for necessary mathematical adaptations)
is to use a mollification of~$\xi_1$ as the test function.

Given $n \in \Nat^*:=\Nat\setminus\{0\}$
(natural numbers contain zero in our convention),
let $\varphi_n:\Real \to \Real$
be the real-valued function satisfying
$\varphi_n(s)=1$ for $|s| \leq n$,
$\varphi_n(s)=0$ for $|s| \geq 2n$
and $\varphi_n(s)=(2n-|s|)/n$ for $n<|s|<2n$.
Then $\|\dot\varphi_n\|_{\sii(\Real)}^2 = 2/n \to 0$
and $\varphi_n \to 1$ pointwise as $n \to \infty$.

We define $\psi_n(s,t) := \varphi_n(s) \xi_1(t)$
and choose~$n$ so large that $\varphi_n=1$
on the interval $I := (-\frac{\theta}{2}R,\frac{\theta}{2}R)$
on which~$\kappa$ is non-trivial.
Because of the symmetry of~$\Gamma$,
one really has $\psi_n \in \dom Q$.
Let us write $\tilde{Q} = \tilde{Q}_1+\tilde{Q}_2$,
where the parts $\tilde{Q}_1$, $\tilde{Q}_2$
are defined below.

For the first part~$\tilde{Q}_1$, we have
$$
\begin{aligned}
  \tilde{Q}_1[\psi_n]
  := \ &
  \int_U \frac{|\partial_s\psi_n|^2}{1-\kappa(s) t} \, \der s \, \der t
  \\
  = \ & \ \int_\Real \int_{-c_-(s)}^{c_+(s)}
  |\dot\varphi_n(s)|^2 |\xi_1(t)|^2
  \, \der t \, \der s
  \\
  \leq \ & \|\dot\varphi_n\|_{\sii(\Real)}^2 \|\xi_1(t)\|_{\sii(\Real)}^2
  \\
  = \ & \frac{2}{n} \xrightarrow[n \to \infty]{} 0
  \,.
\end{aligned}
$$

For the second part~$\tilde{Q}_2$,
integrating by parts (twice),
we have
\begin{equation*}
\begin{aligned}
  \tilde{Q}_2[\psi_n]
  := \ &
  \int_U |\partial_t\psi_n|^2 (1-\kappa(s) t) \, \der s \, \der t
  + \int_U W(t) \, |\psi_n|^2 (1-\kappa(s) t) \, \der s \, \der t
  - E_1 \int_U |\psi_n|^2 (1-\kappa(s) t) \, \der s \, \der t
  \\
  = \ & \int_U \kappa(s) |\varphi_n(s)|^2 \xi_1(t) \dot\xi_1(t)
  \, \der s \, \der t
  + \int_\Real |\varphi_n(s)|^2
  \left[ \xi_1(t) \dot{\xi}_1(t) (1-\kappa(s)t)
  \right]_{t=-c_-(s)}^{t=c_+(s)} \ \der s
  \\
  = \ & \int_\Real |\varphi_n(s)|^2
  \left[
  \frac{1}{2} \xi_1(t)^2 \kappa(s)
  +\xi_1(t) \dot{\xi}_1(t) (1-\kappa(s)t)
  \right]_{t=-c_-(s)}^{t=c_+(s)} \der s
  \,.
\end{aligned}
\end{equation*}
Hence,
$
  \tilde{Q}_2[\psi_n] = \tilde{Q}_2^\mathrm{int}[\psi_n]
  +\tilde{Q}_2^\mathrm{ext}[\psi_n]
$,
where the forms on the right-hand side correspond
to dividing the last integral to an integration
over~$I$ and $\Real\setminus I$, respectively.
Explicitly,
\begin{align*}
  \tilde{Q}_2^\mathrm{int}[\psi_n]
  &= \int_I
  \left[
  \frac{1}{2} \xi_1(t)^2 \kappa(s)
  +\xi_1(t) \dot{\xi}_1(t) (1-\kappa(s)t)
  \right]_{t=-c_-(s)}^{t=c_+(s)} \ \der s
  \,,
  \\
  \tilde{Q}_2^\mathrm{ext}[\psi_n]
  &= \int_{\Real\setminus I} |\varphi_n(s)|^2
  \left[
  \xi_1(t) \dot{\xi}_1(t)
  \right]_{t=-c_-(s)}^{t=c_+(s)} \der s
  =  - \sqrt{-E_1} \int_{\Real\setminus I} |\varphi_n(s)|^2
  \left[
  \xi_1(c_+(s))^2 + \xi_1(-c_-(s))^2
  \right] \der s
  \,,
\end{align*}
where the last equality is due to~\eqref{asymptotics}.
Here $\tilde{Q}_2^\mathrm{int}[\psi_n]$ is of course finite
and in fact independent of~$n$ (because $\varphi_n=1$ on~$I$).
Since $\tilde{Q}_2^\mathrm{ext}[\psi_n]$ is negative for every~$n$,
the limit
$$
  \lim_{n \to \infty} \tilde{Q}_2^\mathrm{ext}[\psi_n]
  = - \sqrt{-E_1} \int_{\Real\setminus I}
  \left[
  \xi_1(c_+(s))^2 + \xi_1(-c_-(s))^2
  \right] \der s
$$
is well defined
(it can be $-\infty$, namely if $\theta=-\pi$,
see Remark~\ref{Rem.pi} below).

In summary,
\begin{multline*}
  \lim_{n \to \infty} \tilde{Q}[\psi_n]
  = \tilde{Q}_2^\mathrm{int}[\psi_n]
  + \lim_{n \to \infty} \tilde{Q}_2^\mathrm{ext}[\psi_n]
  \\
  =
  \int_I
  \left[
  \frac{1}{2} \xi_1(t)^2 \kappa(s)
  +\xi_1(t) \dot{\xi}_1(t) (1-\kappa(s)t)
  \right]_{t=-c_-(s)}^{t=c_+(s)} \ \der s
  -  \sqrt{-E_1} \int_{\Real\setminus I}
  \left[
  \xi_1(c_+(s))^2 + \xi_1(-c_-(s))^2
  \right] \der s
  \,.
\end{multline*}
Using the special form of~$\Gamma$ in this general result,
namely the formulae~\eqref{curvature.special} and~\eqref{cut.special},
one finds
\begin{equation*}
  \tilde{Q}_2^\mathrm{int}[\psi_n]
  = \xi_1(R)^2 \mbox{$\frac{\theta}{2}$} 
  \,, \qquad
  \lim_{n \to \infty} \tilde{Q}_2^\mathrm{ext}[\psi_n]
  = - \xi_1(R)^2\tan\mbox{$\frac{\theta}{2}$}
  \,.
\end{equation*}
We finally arrive at
\begin{equation}\label{result}
  \lim_{n \to \infty} \tilde{Q}[\psi_n]
  = \xi_1(R)^2 \,
  (\mbox{$\frac{\theta}{2}$} -\tan\mbox{$\frac{\theta}{2}$})<0
  \,.
\end{equation}
%
In these circumstances, we can therefore find a positive~$n_0$
such that $\tilde{Q}[\psi_n] < 0$ for every $n \geq n_0$.
\end{proof}
\begin{Remark}\label{Rem.pi}
The theorem is valid also for $\theta=\pi$.
Indeed, then the right-hand side equals~$-\infty$,
because
$
  \tilde{Q}_2^\mathrm{ext}[\psi_n]
$
tends to~$-\infty$ as $n \to \infty$
while $\tilde{Q}_2^\mathrm{int}[\psi_n]$
is independent of~$n$.
Hence the conclusion of the proof holds as well.
However, in this case, the result also follows from
Propositions~\ref{Prop.strict} and~\ref{Prop.ess}.
\end{Remark}

As a direct combination of Theorem~\ref{Thm.main}
with Proposition~\ref{Prop.ess},
we get the following ultimate result
about the existence of discrete spectra
in soft and leaky waveguides.
\begin{Corollary}
Let~$\Gamma$ be given by~\eqref{curve}.  Moreover, assume that~$V$ satisfy~\eqref{Ass.vary} and \eqref{Ass.attractive}.
If $\theta \in (0,\pi)$,
then
$$
  \sigma_\mathrm{disc}(H) \not= \varnothing \,.
$$
\end{Corollary}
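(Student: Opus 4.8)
The plan is to read the claim off directly from Theorem~\ref{Thm.main} and Proposition~\ref{Prop.ess}. First I would recall that, under the standing hypotheses (either $V$ essentially bounded or the attractive leaky interaction~\eqref{leaky} on a tube that does not overlap itself), the form~$Q$ is closed and bounded below, so~$H$ is a self-adjoint operator which is bounded from below; consequently, the portion of $\sigma(H)$ lying strictly below $\inf\sigma_{\mathrm{ess}}(H)$ is, if non-empty, contained in $\sigma_{\mathrm{disc}}(H)$. This is just the standard identity $\sigma_{\mathrm{disc}}(H)=\sigma(H)\setminus\sigma_{\mathrm{ess}}(H)$ for semibounded self-adjoint operators, together with the fact that below the bottom of the essential spectrum the spectrum is purely discrete.

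Next I would invoke Proposition~\ref{Prop.ess} in the range $\theta\in[0,\pi)$, which identifies $\sigma_{\mathrm{ess}}(H)=[E_1,\infty)$ and hence $\inf\sigma_{\mathrm{ess}}(H)=E_1$. On the other hand, Theorem~\ref{Thm.main} asserts, for $\theta\in(0,\pi)$, the strict inequality $\inf\sigma(H)<E_1$. Combining the two gives $\inf\sigma(H)<\inf\sigma_{\mathrm{ess}}(H)$. Since $\sigma(H)$ is closed and non-empty, the value $\inf\sigma(H)$ belongs to $\sigma(H)$, and by the previous paragraph it lies in $\sigma_{\mathrm{disc}}(H)$; therefore $\sigma_{\mathrm{disc}}(H)\neq\varnothing$. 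Equivalently, one may phrase the argument through the minimax principle used throughout the paper: the first minimax value equals $\inf\sigma(H)$, which by the above is strictly below $\inf\sigma_{\mathrm{ess}}(H)$, so~$H$ possesses at least one discrete eigenvalue beneath its essential spectrum.

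I do not anticipate any genuine obstacle: the corollary is a formal consequence of the two preceding results, the only point deserving a sentence of justification being the classical fact that the spectrum of a semibounded self-adjoint operator below the bottom of its essential spectrum is purely discrete. The borderline value $\theta=\pi$ is deliberately excluded from the statement; as observed in Remark~\ref{Rem.pi}, it can in any event be treated by combining Propositions~\ref{Prop.strict} and~\ref{Prop.ess}.
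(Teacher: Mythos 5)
Your proposal is correct and follows exactly the paper's route: the Corollary is stated there as ``a direct combination of Theorem~\ref{Thm.main} with Proposition~\ref{Prop.ess}'', i.e.\ $\inf\sigma(H)<E_1=\inf\sigma_{\mathrm{ess}}(H)$ for $\theta\in(0,\pi)$, whence the spectrum below the essential spectrum is non-empty and purely discrete. Your extra sentence justifying the semiboundedness/discreteness step is a harmless elaboration of what the paper leaves implicit.
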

\begin{Remark} Note that the discrete spectrum is obviously empty for
the straight waveguide corresponding to $\theta=0$.
In this paper we leave the problem of the existence of discrete eigenvalues for
$\theta =\pi $ open. Also it arises a natural question about spectral properties of the system if $\theta \to \pi^-$. One may expect that in this case the number of discrete eigenvalues goes  to infinity. Therefore the question is: what is the asymptotics of the counting function for $\theta \to \pi^-$?
\end{Remark}

\subsection*{Acknowledgment}
D.K.\ was supported
by the EXPRO grant No.~20-17749X
of the Czech Science Foundation.
J. K. was supported by Internal Project of Excellent Research of the Faculty of Science of University Hradec Kr\'alov\'e, ``Studying of properties of confined quantum particle using Woods-Saxon potential''.

\providecommand{\bysame}{\leavevmode\hbox
to3em{\hrulefill}\thinspace}
\providecommand{\MR}{\relax\ifhmode\unskip\space\fi MR }
\providecommand{\MRhref}[2]{%
  \href{http://www.ams.org/mathscinet-getitem?mr=#1}{#2}
} \providecommand{\href}[2]{#2}

\end{document}